\documentclass[graybox]{svmult}

\usepackage{mathptmx}       
\usepackage{helvet}         
\usepackage{courier}        
\usepackage{type1cm}        
%
\usepackage{makeidx}         
\usepackage{graphicx}        
\usepackage{multicol}        
\usepackage[bottom]{footmisc}

\usepackage{amsmath }
\usepackage{amsfonts}
\usepackage{amssymb}


\newcommand{\be}{\begin{equation}}
\newcommand{\ee}{\end{equation}}
\newcommand{\sech}{\mathrm{sech}}

\makeindex             


\begin{document}


\title*{Localised nonlinear modes in the {\it PT}-symmetric
double-delta well  Gross-Pitaevskii equation
}

 \titlerunning{Localised modes in  \emph{PT}-symmetric Gross-Pitaevskii}
\author{I. V. Barashenkov and  D. A. Zezyulin}   
\institute{I. V. Barashenkov 
\at National Institute for Theoretical Physics, Western Cape, South Africa and  Department of Mathematics, University of Cape Town,  Rondebosch 7701. \email{Igor.Barashenkov@uct.ac.za}
\and  D.  A. Zezyulin \at Centro de F\'isica Te\'orica e Computacional and Departamento de F\'isica,
Faculdade de Ci\^encias da Universidade de Lisboa, Campo Grande, Edif\'icio C8, Lisboa  P-1749-016, Portugal. \email{dzezyulin@fc.ul.pt}      }
%
%
\maketitle

\abstract{We construct exact localised solutions of the {\it PT}-symmetric
Gross-Pitaevskii equation with an attractive cubic nonlinearity.
The trapping potential has the form of two $\delta$-function wells,
where one well loses particles while the other one is fed with atoms at an equal rate.
The parameters of the constructed solutions are expressible
 in terms of the roots of a system of two
transcendental algebraic equations.
We also furnish a simple analytical treatment of the linear
Schr\"odinger equation with the \emph{PT}-symmetric double-$\delta$ potential.
}

\section{Introduction}
\label{Intro}

We consider the Gross-Pitaevskii equation,
\be
i  \, \Psi_t + \Psi_{xx} - V(x)
 \Psi + g  |\Psi|^2  \Psi= 0,
\label{A0}
\ee
with $g \geq 0$ and the  \textit{PT}-symmetric potential
\be
\label{A00}
V(x)= U(x) + iW(x), \qquad U(-x)=U(x),   \quad  W(-x)=-W(x).
\ee

 The system
 (\ref{A0})--(\ref{A00})  was employed to model the dynamics 
 of the self-gravitating 
  boson condensate trapped in a confining potential   $U(x)$. The imaginary coefficient 
    $iW$ 
    accounts for  the particle leakage   --- in the region where $W(x)<0$ --- and
     the compensatory injection of atoms in the region where $W(x)>0$ \cite{Klaiman,Cartarius1}. 
     
     The  same equation was used to  describe  the stationary light beam propagation in the self-focusing Kerr medium. 
     In the optical  context, $t$ stands for  the  scaled propagation distance while 
       $\Psi$ is  the complex electric-field envelope.
         The real part of the potential ($U$)  is associated with the refractive index guiding, while  the imaginary part ($W$) 
         gives the optical gain and loss distribution  \cite{Musslimani}.

We are interested in localised solutions of this equation, that is, solutions with the asymptotic behaviour
$\Psi(x,t) \to 0$ as $x \to \pm \infty$. We also require  that
\be
\int_{-\infty}^\infty |\Psi|^2 dx =1.
\label{norm0}
\ee
In the context of leaking condensate with injection, the normalisation condition 
\eqref{norm0} implies that 
the total   number  of particles  in the condensate is kept at a constant level.

In this study, we consider stationary solutions of the form $\Psi(x,t)=\psi(x)e^{i \kappa^2 t}$, where
$\kappa^2$ is real and the spatial part of the eigenfunction
  obeys
\be
-\psi_{xx} +V(x)
 \psi - g\psi|\psi|^2 = -\kappa^2 \psi.
\label{A1}
\ee
Assuming that the potential satisfies $V(x) \to  0$ as $x \to \pm \infty$, $\kappa^2$ has to be taken positive.
For definiteness, we choose the real  $\kappa$  to be  positive as well.
The equation \eqref{A1} will be solved under 
 the normalisation constraint
\be
\int_{-\infty}^\infty |\psi|^2 dx =1,
\label{norm}
\ee
stemming from the condition \eqref{norm0}.


Our study will be confined to the \emph{PT}-symmetric  solutions
of equation \eqref{A1}, that is, solutions satisfying
\be
\psi(-x)= \psi^*(x).
\label{pspt}
\ee
Typically, 
stationary solutions supported by  \emph{PT}-symmetric potentials can be brought to the  form \eqref{pspt}
by a suitable constant phase shift.

With an eye to the  forthcoming study of the  jamming anomaly \cite{jam},
we consider  a \textit{PT}-symmetric  potential of the special
form:
\be
V(x)= -(1- i \gamma) \delta (x+L/2)
- (1+ i \gamma) \delta (x-L/2).
\label{C90}
\ee
Here $\gamma \geq 0$ and $L>0$. 
 The $V(x)$ is an idealised   potential  consisting of two infinitely deep
wells. The right-hand well is leaking particles, 
while its left-hand counterpart is injected with atoms at an equal  constant rate $\gamma$.

Previous analyses
of the double-delta cubic Gross-Pitaevskii equation
 focussed mainly on the situation with no gain or loss ---
that is, on the potential \eqref{C90} with $\gamma=0$.
Using a combination of analytical and numerical tools,
Gaididei, Mingaleev and Christiansen \cite{GMC}
demonstrated the spontaneous breaking of the left-right symmetry by
localised solutions.
Subsequently,
Jackson and Weinstein \cite{JW}
 performed geometric analysis of the symmetry breaking
 and classified the underlying
 bifurcations  of stationary solutions. Besides the absence of gain and loss,
 the mathematical setting of Ref.\cite{JW} was different from our present problem in the lack of
 the normalisation condition \eqref{norm0}.

Studies of the  \textit{PT}-symmetric model with $\gamma \ne 0$ were pioneered by Znojil and Jakubsk\'y
 who analysed the linear Schr\"odinger equation with  point-like gain and loss (but no wells) 
on a finite interval \cite{Znojil1,Znojil2}. 
 The double-well  potential  \eqref{C90} was proposed by Uncu and Demiralp \cite{Uncu}
  whose paper also focussed on the linear equation --- yet on the infinite line. 
 Cartarius, Haag, Dast, and Wunner \cite{Cartarius1,Cartarius2}
 considered both linear and nonlinear Gross-Pitaevskii model. The numerical study of Refs.\cite{Cartarius1,Cartarius2}
 identified a branch
of localised nonlinear modes bifurcating from eigenvalues of the linear operator in \eqref{A1}.  

In this contribution, we get an analytical handle on the \emph{PT}-symmetric double-$\delta$ 
problem, linear and, most importantly, nonlinear. In the linear situation
(equation \eqref{A1} with $g=0$) we provide a 
mathematical interpretation and verification of the numerics reported in \cite{Cartarius1,Cartarius2}.
In the nonlinear case ($g \neq 0$), the analytical consideration allows us to
advance beyond the numerical conclusions of the previous authors.
In particular, we demonstrate the existence of infinite families of localised solutions 
with multiple humps and dips between the two potential singularities.


\section{Linear Schr\"odinger equation with complex double-$\delta$ well potential}

Relegating the analysis of the full nonlinear equation    \eqref{A1}, \eqref{C90}
to the
subsequent sections,
here we consider its  linear
particular case ($g=0$).
The normalised  eigenfunction pertaining to the eigenvalue $-\kappa^2$
is given by
\be
\psi(x)= \left\{
\begin{array}{lr}
   \frac{e^{i \phi} + e^{ \kappa L- i \phi}}{2 \sqrt N}  e^{\kappa x}, & x \leq -L/2;  \\ \\
   \frac{ \cosh (\kappa x+ i \phi)}{\sqrt{N}}, & -L/2 \leq x \leq L/2;  \\ \\
 \frac{e^{-i \phi} + e^{ \kappa L+ i \phi}} {2 \sqrt N}   e^{-\kappa x}, & x \geq L/2.
 \end{array}
 \right.
 \nonumber
 \ee
Here $\kappa$ is a positive root of the transcendental equation    \cite{Uncu,Cartarius1,Cartarius2}
\be
e^{- 2 \kappa L}= \frac{\gamma^2+ (2 \kappa-1)^2}{\gamma^2+1},
\label{C6}
\ee
while $\phi$ and $N$ are readily expressible through $\kappa$.
The secular equation (\ref{C6}) was solved numerically in \cite{Cartarius1,Cartarius2}.
Here, we analyse it
without resorting to the help of computer.

To this end, we express
 $\gamma$ as an explicit function of $\kappa$:
\be
\gamma^2= \frac{4 \kappa (1-\kappa)}{1- e^{-2 \kappa L}}-1.
\label{C7}
\ee
Instead of evaluating eigenvalues $\kappa$ as the parameter $\gamma>0$ is varied,
we identify the range of positive $\kappa$ where the function $\gamma^2(\kappa)$ is positive.
We prove the following

\begin{proposition}
Regardless of the value of $L$, there is a finite interval of $\kappa$ where $\gamma^2>0$.
When $L<2$, the interval is $0< \kappa< \kappa^{(b)}$, and when
$L>2$, the interval is $\kappa^{(a)}< \kappa<\kappa^{(b)}$. Here $\kappa^{(a)}$
and $\kappa^{(b)}$ are  dependent on $L$, with $0< \kappa^{(a)}<\kappa^{(b)} <1$.
\end{proposition}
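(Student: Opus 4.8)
The plan is to work entirely with the explicit expression \eqref{C7} and to reduce the inequality $\gamma^2>0$ to a comparison of two elementary functions. Since $1-e^{-2\kappa L}>0$ for every $\kappa>0$, the sign of $\gamma^2$ coincides with the sign of $g(\kappa):=4\kappa(1-\kappa)-1+e^{-2\kappa L}$. The key algebraic observation is that $4\kappa(1-\kappa)-1=-(2\kappa-1)^2$, so that $g(\kappa)=e^{-2\kappa L}-(2\kappa-1)^2$ and the condition $\gamma^2>0$ is equivalent to $e^{-\kappa L}>|2\kappa-1|$. Thus I only need to locate the set where the decaying exponential $e^{-\kappa L}$ lies above the V-shaped graph of $|2\kappa-1|$. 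The two graphs already touch at $\kappa=0$ (both equal $1$), and the exponential sits strictly above the vertex at $\kappa=1/2$, since there $e^{-\kappa L}=e^{-L/2}>0=|2\kappa-1|$; this makes $\kappa=1/2$ a guaranteed point of the interval for every $L$, explaining why the interval is never empty.

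Next I would treat the two branches of $|2\kappa-1|$ separately. On $(1/2,\infty)$ the right branch $2\kappa-1$ is increasing while $e^{-\kappa L}$ is decreasing, so $F(\kappa):=e^{-\kappa L}-(2\kappa-1)$ is strictly decreasing; it is positive at $\kappa=1/2$, equals $e^{-L}-1<0$ at $\kappa=1$, and tends to $-\infty$, so it has exactly one zero $\kappa^{(b)}\in(1/2,1)$, which is the upper endpoint for all $L$. On the left branch $[0,1/2]$ the comparison is more delicate because both $e^{-\kappa L}$ and $1-2\kappa$ decrease; here I would study $\phi(\kappa):=e^{-\kappa L}-(1-2\kappa)$, noting $\phi(0)=0$, $\phi(1/2)=e^{-L/2}>0$, $\phi'(0)=2-L$, and $\phi''(\kappa)=L^2e^{-\kappa L}>0$, so that $\phi$ is convex and therefore has at most two zeros in total.

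The role of the threshold $L=2$ now appears through the sign of $\phi'(0)=2-L$. When $L<2$ we have $\phi'(0)>0$, and since $\phi'$ is increasing (by convexity) $\phi$ is increasing on $[0,1/2]$; hence $\phi>0$ on $(0,1/2]$ and the exponential never dips below $|2\kappa-1|$ to the left of the vertex. Together with the right branch this yields the interval $0<\kappa<\kappa^{(b)}$. When $L>2$ we have $\phi'(0)<0$, so $\phi$ becomes negative immediately to the right of $0$; since $\phi(1/2)>0$ and $\phi$ is convex with the zero at the origin already accounted for, there is exactly one further zero $\kappa^{(a)}\in(0,1/2)$, at which $\phi$ passes from negative to positive. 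This produces the interval $\kappa^{(a)}<\kappa<\kappa^{(b)}$. In both cases the evaluations at $0$, $1/2$ and $1$ give the ordering $0<\kappa^{(a)}<1/2<\kappa^{(b)}<1$ claimed in the proposition, and monotonicity together with convexity rule out any extra sign changes, so $\gamma^2<0$ outside these intervals.

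The step I expect to be the main obstacle is the left branch in the case $L>2$, where both competing functions are decreasing and a naive monotonicity comparison is inconclusive. The clean way past it is the convexity of $\phi$: a convex function has at most two zeros, one of which is forced to be $\kappa=0$, so the single sign change detected between $0$ and $1/2$ is necessarily the only positive root. Should the factorization $-(2\kappa-1)^2$ be overlooked, the same conclusion can be reached by working directly with $g$ and observing that $g'''(\kappa)=-8L^3e^{-2\kappa L}<0$ makes $g'$ strictly concave, whence $g$ has at most three zeros on $[0,\infty)$; since $g(0)=0$, at most two of them are positive, and the values $g(1/2)=e^{-L}>0$ and $g(1)=e^{-2L}-1<0$ then fix their number and location.
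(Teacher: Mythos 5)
Your proof is correct and follows essentially the same route as the paper: the condition $\gamma^2>0$ is rearranged into the same graph comparison (your $e^{-\kappa L}>|2\kappa-1|$ is exactly the paper's $k_1(\kappa)<\kappa<k_2(\kappa)$ with $k_{1,2}=(1\mp e^{-\kappa L})/2$; indeed your $\phi$ and $F$ are just $-2\left(k_1(\kappa)-\kappa\right)$ and $2\left(k_2(\kappa)-\kappa\right)$), and the threshold $L=2$ enters in both treatments through the slope of this comparison at the origin. The only difference is completeness: where the paper merely asserts that $k_1$ crosses the diagonal exactly once when $L>2$ and that $k_2$ crosses exactly once for every $L$, you justify these uniqueness claims via the strict convexity of $\phi$ and the strict monotonicity of $F$, a worthwhile tightening rather than a different method.
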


\begin{proof}
The inequality $\gamma^2>0$ amounts to $k_1< \kappa <k_2$, where
the endpoints of the interval $(k_1,k_2)$ also depend on $\kappa$:
\[
k_1(\kappa) =\frac{1-e^{-\kappa L}}{2}, \quad k_2(\kappa) = \frac{1+ e^{-\kappa L}}{2}.
\]
If $L<2$,
the quantity $k_1(\kappa)$
is smaller than $\kappa$
for all $\kappa>0$. If, on the other hand, $L>2$, the graph of the function
$y=k_1(\kappa)$ lies above
 $y=\kappa$ in the interval $0 \leq \kappa < \kappa^{(a)}$
and below $y=\kappa$  in the interval $ \kappa^{(a)}< \kappa< \infty$.
Here $\kappa^{(a)}=\kappa^{(a)}(L)$ is the (unique) root of the equation
$k_1(\kappa)=\kappa$.

On the other hand,  the function $k_2(\kappa)$
 is greater than $\kappa$
when $0<\kappa<\kappa^{(b)}$ and smaller than $\kappa$ when $\kappa>\kappa^{(b)}$.
Here $\kappa^{(b)}=\kappa^{(b)}(L)$ is the  root of the equation $k_2(\kappa)=\kappa$.
(There is a unique root for all $L>0$.)

Note that in the range of the $L$ values where the root $\kappa^{(a)}$ exists
--- that is, in the region $L>2$ ---  we have $\kappa^{(a)} < \kappa^{(b)}$.
Since $\gamma^2(1)<0$, we have $\kappa^{(b)}<1$.
 $\Box$
\end{proof}

Our next result concerns the number of eigenvalues arising for various $\gamma$.
Again, instead of counting branches of the function $\kappa(\gamma)$, we
identify regions of monotonicity of the inverse function, $\gamma(\kappa)$.
These are separated by the points of local extrema (stationary points).

\begin{proposition}
When $L<1$, the function $\gamma(\kappa)$ is monotonically decreasing as
$\kappa$ changes from 0 to $\kappa^{(b)}$, with $\kappa^{(b)}$ defined above.
When $L>1$, the function $\gamma(\kappa)$ has a single local maximum
at $\kappa=\kappa_c$ (where $\kappa_c< \kappa^{(b)}$).
\end{proposition}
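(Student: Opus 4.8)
The plan is to work with $\gamma^2(\kappa)$ rather than with $\gamma(\kappa)$ itself: on the interval where $\gamma^2>0$ the square root is a monotone function, so $\gamma$ and $\gamma^2$ share the same intervals of monotonicity and the same stationary points. Differentiating \eqref{C7}, the sign of $(\gamma^2)'$ is governed by the numerator of the derivative which, after discarding the positive denominator and a factor of $4$, equals
\be
h(\kappa)=(1-2\kappa)\left(1-e^{-2\kappa L}\right)-2L\kappa(1-\kappa)\,e^{-2\kappa L}.
\nonumber
\ee
One sees at once that $h(0)=0$, so $\kappa=0$ is always a critical value; the whole question is how many further sign changes $h$ undergoes on $(0,\kappa^{(b)})$.

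The key step is to differentiate a second time. A short computation gives the remarkably clean factorisation
\be
h'(\kappa)=2e^{-2\kappa L}\,m(\kappa),\qquad m(\kappa):=2L^2\kappa(1-\kappa)-\left(e^{2\kappa L}-1\right),
\nonumber
\ee
so that $\mathrm{sgn}\,h'=\mathrm{sgn}\,m$. The auxiliary function $m$ is far easier to control than $h$: since $m''(\kappa)=-4L^2\left(1+e^{2\kappa L}\right)<0$ it is strictly concave, while $m(0)=0$ and $m'(0)=2L(L-1)$. Thus the threshold $L=1$ enters precisely through the sign of the initial slope $m'(0)$.

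For $L<1$ we have $m'(0)<0$; concavity then forces $m'<0$ and hence $m<0$ for all $\kappa>0$, so $h'<0$. Together with $h(0)=0$ this gives $h<0$ throughout $(0,\infty)$, whence $\gamma^2$ — and therefore $\gamma$ — decreases monotonically from $\kappa=0$ to $\kappa^{(b)}$. For $L>1$ we have $m'(0)>0$; by concavity $m$ rises, crosses zero exactly once at some $\kappa_*$, and stays negative afterwards. Hence $h$ increases on $(0,\kappa_*)$ and decreases on $(\kappa_*,\infty)$; being zero at the origin, positive at its peak, and tending to $-\infty$ as $\kappa\to\infty$ (the polynomial term overwhelms the vanishing exponential), it possesses a single positive zero $\kappa_c>\kappa_*$, positive to its left and negative to its right. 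Consequently $\gamma^2$ rises up to $\kappa_c$ and falls thereafter, producing exactly one local maximum.

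Finally I would confirm that $\kappa_c$ lies inside the positivity interval and satisfies $\kappa_c<\kappa^{(b)}$. This follows from the unimodal (rise-then-fall) shape just established together with the preceding proposition: $\gamma^2$ vanishes at $\kappa^{(b)}$ (and at $\kappa^{(a)}$ when $L>2$) and is positive in between, so its unique interior maximum $\kappa_c$ must sit strictly below $\kappa^{(b)}$. I expect the only real obstacle to be computational — obtaining the clean identity $h'=2e^{-2\kappa L}m$ and recognising that the concave $m$ collapses the entire monotonicity analysis onto the two data $m(0)=0$ and $m'(0)=2L(L-1)$; once that factorisation is in hand, the dichotomy at $L=1$ and the whole statement are forced.
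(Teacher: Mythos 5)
Your proof is correct, and it takes a genuinely different route from the paper's. Both start from the same derivative of \eqref{C7} (your $h$ equals the paper's numerator: $h = 2L\kappa e^{-2\kappa L}(1-2\kappa)\,[f-g]$ in the paper's notation, where $f=(e^{2L\kappa}-1)/(2L\kappa)$ and $g=(1-\kappa)/(1-2\kappa)$), but the subsequent analyses diverge. The paper restricts attention to $0<\kappa<1/2$ (where $g>0$), proves the $L<1$ case by a term-by-term power-series comparison ($f_n=(2L)^n/(n+1)!<2^{n-1}=g_n$), and for $L>1$ gets existence of $\kappa_c$ from a sign change of $f-g$ and uniqueness by a bifurcation-in-$L$ argument: a second stationary point could only be born at a tangency $f=g$, $f'=g'$, which reduces to $(e^q-1)/q=1+q/2$ having a positive root, an impossibility. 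You instead differentiate the numerator once more; I checked the factorisation $h'=2e^{-2\kappa L}m$ with $m=2L^2\kappa(1-\kappa)-(e^{2\kappa L}-1)$, and it is exactly right, as are $m(0)=0$, $m'(0)=2L(L-1)$ and $m''=-4L^2(1+e^{2\kappa L})<0$. Strict concavity of $m$ then yields the full sign pattern of $h$ in both cases at once, so existence and uniqueness of the maximum come in a single stroke. What your route buys: it is more self-contained and arguably more rigorous on the uniqueness point, since the paper's argument tacitly assumes stationary points can only appear in tangent pairs as $L$ is varied continuously, whereas your unimodality of $h$ needs no such parametric continuity; you also avoid the series comparison with its convergence caveat at $|\kappa|=1/2$. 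What the paper's route buys: it localises all stationary points explicitly in $(0,1/2)$ and makes the origin of the threshold $L=1$ transparent through the slopes $f'(0)=L$ versus $g'(0)=1$ (your threshold appears equivalently through the sign of $m'(0)$). Your final step — that unimodality together with $\gamma^2>0$ on the interior of the positivity interval and $\gamma^2(\kappa^{(b)})=0$ forces $\kappa_c<\kappa^{(b)}$ — is sound and fills in a point the paper leaves essentially implicit.
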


\begin{proof}
Stationary points of the function $\gamma^2(\kappa)$ are given by zeros of
\be
\frac{d \gamma^2}{d \kappa}=
\frac{(1-2\kappa)}
{2\kappa L \sinh^2(L\kappa)}
\left[ f(\kappa)-g(\kappa) \right],
\label{de}
\ee
where
\[
f= \frac{e^{2L \kappa} -1}{2L \kappa},    \quad
g= \frac{1-\kappa}{1-2\kappa}.
\]
We consider \eqref{de} for $0 < \kappa<1$.
(Note that $\kappa=1/2$ is not a zero of $d \gamma^2/ d\kappa$.)
When  $1/2<\kappa<1$, the function $g$ is strictly negative; hence
stationary points may only lie in the interval $0< \kappa < 1/2$.


%

\begin{figure}[t]
\center
\includegraphics[width=8cm]{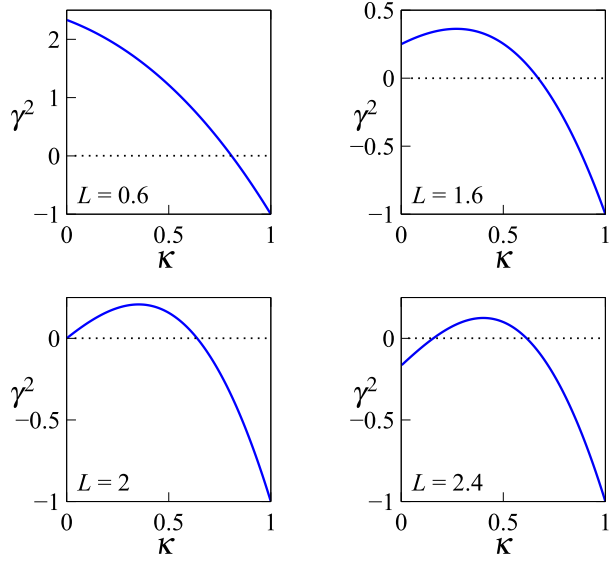}
\caption{The function (\ref{C7}) plotted for several representative values of $L$.
In the  interval of $\kappa$ where $\gamma^2 \geq 0$, the function gives the square of the gain-loss coefficient.
 The inverse function $\kappa(\gamma^2)$ is obtained simply by flipping the part of the graph with $\gamma^2 \geq 0$
  about the $\gamma^2 =\kappa$ line
(see Fig.\ref{Fig2}).
}
\label{Fig1}       
\end{figure}

Assume, first, that $L<1$
and expand  $f(\kappa)$ and $g(\kappa)$ in powers of $\kappa$:
\begin{align}
f= 1+ \sum_{n=1}^\infty  f_n \kappa^n, \quad  f_n= \frac{(2L)^n}{(n+1) !} \label{sf}
\\
g= 1+ \sum_{n=1}^\infty    g_n \kappa^n,  \quad g_n= 2^{n-1}. \label{sg}
\end{align}
The series \eqref{sf} converges in the entire complex plane of $\kappa$ while the 
series \eqref{sg}  converges in the disc $|\kappa|< 1/2$.
Noting that
$f_n<g_n$ for all $n$, we conclude that
 $f(\kappa)<g(\kappa)$ for all $0< \kappa< 1/2$.
 This implies that for any $L<1$, the function $\gamma^2(\kappa)$ decreases monotonically 
 as $\kappa$ changes from 0 to 1.

Let now $L>1$. 
The values of $f$ and $g$ at the origin are equal
while their slopes are not:
\[
\left. \frac{df}{d \kappa} \right|_{\kappa=0}= L,
\quad
\left.  \frac{dg}{d \kappa} \right|_{\kappa=0}=1.
\]
Consequently, the graph  of $f(\kappa)$ lies above the graph of $g(\kappa)$ 
as long as  $\kappa$ remains sufficiently
close to the origin.
At the opposite end of the interval, that is, in the vicinity of $\kappa=1/2$,
 the graph of $g(\kappa)$ lies above $f(\kappa)$.
Therefore the equation $f(\kappa)=g(\kappa)$  has (at least one) root $\kappa_c$ in the interval $0< \kappa< 1/2$.
This root   emerges from the point $\kappa=0$
as soon as $L$ becomes greater than  $1$.

\begin{figure}[t]
\center
\includegraphics[width=8.5cm]{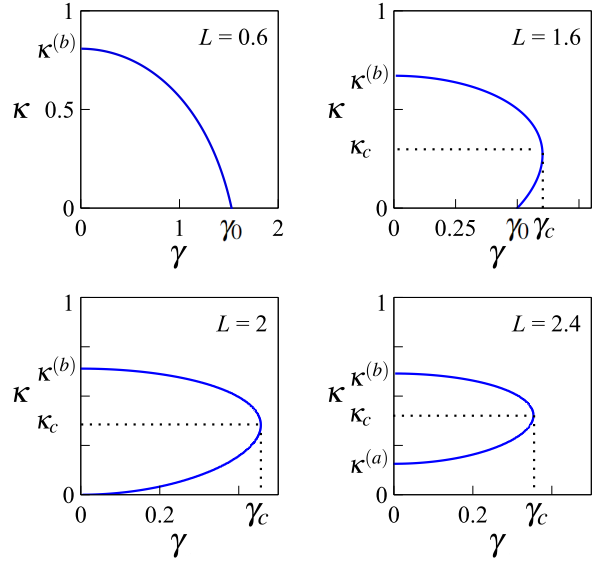}
%
%
\caption{Positive roots of \eqref{C6}  vs the gain-loss coefficient $\gamma$, for several representative values of $L$.
}
\label{Fig2}       
\end{figure}

%

 To show that no additional stationary points can emerge as $L$ is further increased, assume
 the contrary --- assume
 that a pair of stationary points is born
 as $L$ passes through a critical value $L_*$
 (where $L_*> 1$). At  the bifurcation value $L=L_*$,
 the newborn stationary points are equal; we denote them $\kappa_*$.
 When $L=L_*$, the equality
 \be
  \left. \frac{df}{d \kappa} \right|_{\kappa=\kappa_*}   = \left. \frac{d g}{d \kappa} \right|_{\kappa=\kappa_*}
  \label{str2}
  \ee
  should be fulfilled along
 with 
 \be
 f(\kappa_*)=g(\kappa_*).
 \label{fgk}
 \ee
 Solving the system \eqref{str2}, \eqref{fgk}  yields
  $L_*=(1-2 \kappa_*)^{-1}$. This can be written as
  \be
  L_*=1+q, \label{Lx}
  \ee
  where  we have defined $q= 2 \kappa_* L_*$. Making use of \eqref{Lx},  equation \eqref{fgk} becomes
  \[
  \frac{e^q-1}{q} = 1+ \frac{q}{2}.
  \]
  Expanding the left-hand side in powers of $q$ one can readily check that it is greater than the right-hand side for 
  any $q>0$; hence the above equality can  never be satisfied.
  This  proves that the stationary point $\kappa_c$ of the function $\gamma^2(\kappa)$ is single.
  
   Since $d \gamma^2/ d \kappa|_{\kappa=0}>0$,
  the  stationary point $\kappa_c$
  is a maximum.  $\Box$
  \end{proof}

The above propositions are illustrated by
Fig.\ref{Fig1} which shows $\gamma^2(\kappa)$ with $L<1$ (a);
$1<L<2$ (b); $L=2$ (c), and $L>2$ (d).

Our conclusions are sufficient to determine the shape of
the inverse function, $\kappa(\gamma)$.
When $L<1$, there is a single positive branch of $\kappa(\gamma)$
which decays, monotonically, as $\gamma$ is increased  from zero to $\gamma_0$
(Fig. \ref{Fig2}(a)).
As $\gamma$ reaches $\gamma_0$, the quantity $\kappa$ drops to zero and 
the  eigenvalue $-\kappa^2$ collides with the continuous spectrum.
Since $\gamma^2(0)= 2/L-1$, we can obtain the critical value of $\gamma$ exactly:
 $\gamma_0= \sqrt{2/L-1}$.

When $L$ is taken between $1$ and $2$, the function $\kappa(\gamma)$ has two branches
(Fig. \ref{Fig2}(b)).
Along the
 monotonically decreasing branch, $\kappa$ drops from $\kappa^{(b)}$ to $\kappa_c$
 as $\gamma$ is raised from 0 to $\gamma_c$.
 In addition, there is a monotonically increasing branch with $\gamma_0< \gamma< \gamma_c$.
 Here, $\kappa$ grows from 0 to $\kappa_c$ as
$\gamma$ is increased  from $\gamma_0$ to $\gamma_c$.
The two eigenvalues  merge and become complex as  $\gamma$ is raised through
 $\gamma_c$.

Finally, when $L \geq 2$, the monotonically decreasing and increasing branch of
 $\kappa(\gamma)$ exist over the same interval $0< \gamma< \gamma_c$
 (Fig. \ref{Fig2} (c,d)).
As $\gamma$ grows from 0 to $\gamma_c$, one branch of $\kappa$
grows from $\kappa^{(a)}$ to $\kappa_c$ whereas the
other one decreases from $\kappa^{(b)}$ to $\kappa_c$.

These conclusions are in agreement with the numerical results of \cite{Cartarius1,Cartarius2}.

\section{\emph{PT}-symmetric Gross-Pitaevskii with   variable-depth wells }

Proceeding to the nonlinear situation ($g \neq 0$),
  it is convenient to transform the stationary equation (\ref{A1})   to
\be
\varphi_{\tau \tau } + \lambda \left[\delta(\tau+T)+ \delta(\tau-T)\right] \varphi
- i \eta [\delta(\tau+T)- \delta(\tau-T)] \varphi
  + 2  \varphi|\varphi|^2 = \varphi,
\label{A100}
\ee
with
\be
\tau= \kappa x, \quad T=\kappa \frac{L}{2}, \quad
\varphi=  \sqrt{\frac{g}{2}} \frac{\psi}{\kappa}, \quad  \lambda= \frac{1}{\kappa},
\quad  \eta=\frac{\gamma}{\kappa}.
\nonumber
\ee
Here $\eta \geq 0$, $\lambda>0$, and $T>0$.
In equation (\ref{A100}) the chemical potential has been normalised to unity at
the expense of making the well depths, $\lambda$, variable.
The normalisation constraint (\ref{norm})
acquires the form
\be
\int_{-\infty}^\infty |\varphi|^2 d \tau = \frac{\lambda}{2}g,
\label{norm2}
\ee
while the symmetry condition (\ref{pspt}) translates into
\be
\varphi^*(\tau) = \varphi(-\tau).
\label{PT}
\ee

Consider a solution $\varphi(\tau)$  of the equation \eqref{A100} and denote
 $N= \int |\varphi|^2 d \tau$ the corresponding number of particles. The number of particles is a function of $\lambda$, $\eta$ and $T$: $N=N(\lambda, \eta, T)$.
Setting $g$ to a particular value, the constraint (\ref{norm2}) defines a two-dimensional surface in the $(\lambda, \eta, T)$ space:
\[
\frac{1}{\lambda} N(\lambda, \eta, T)= \frac{g}{2}.
\]
For any fixed $L$,
the ``nonlinear eigenvalue"
$\kappa=\lambda^{-1}$ becomes an (implicit) function of $\gamma$:
\[
\kappa N  \left(\frac{1}{\kappa}, \frac{\gamma}{\kappa}, \frac{\kappa L}{2}
\right)= \frac{g}{2}.
\]
The purpose of our study is to construct the solution $\varphi(\tau)$
and determine
this function.

It is fitting to note here that
the equation (\ref{A1}) with $g = 0$ can also be transformed to the form (\ref{A100}) --- where 
one just needs to drop the cubic term.
In this  case, the number of particles  is not fixed though; that is, the constraint (\ref{norm2}) does not need to be satisfied.
The relation between  $\kappa$ and $\gamma$ --- equation \eqref{C6} --- arises
as a secular equation for an eigenvalue problem.

\section{Particle moving in a mexican-hat potential}
\label{Mexican}

In the external region $|\tau | \geq T$, the
\emph{PT}-symmetric solutions with the boundary conditions $\varphi(\pm \infty)=0$
have the form
\begin{eqnarray}
\varphi(\tau)= e^{-i \chi} \sech (\tau+\mu), \quad & \tau \leq -T;     \nonumber    \\
\varphi(\tau) = e^{i \chi} \sech (\tau -\mu), \quad &  \tau \geq T.     \label{A6}
\end{eqnarray}
Here $\mu$ is an arbitrary real value, positive or negative, determining the amplitude of $\varphi$,
and $\chi$ is an arbitrary real phase.
The solution $\varphi(\tau)$ in the internal region $ |\tau| \leq T$ will be matched to the values of \eqref{A6}
at the endpoints
 $\tau = \pm T$:
\be
\varphi( \pm T)= e^{  \pm  i \chi} \sech (\mu-T).   \label{B3}
\ee
Integrating \eqref{A100} across the singularities and using \eqref{A6} once again, we obtain the
matching conditions for the derivatives as well:
\begin{eqnarray}
\left.  {\dot  \varphi}   \right|_{\tau = T-0}  = e^{i \chi}  \sech(\mu-T)   [  i \eta
+\lambda   +  \tanh(\mu-T)], \nonumber   \\
\left.  {\dot \varphi}   \right|_{\tau = -T+0}= e^{ -i \chi}  \sech(\mu-T)   [  i \eta
-\lambda    -  \tanh(\mu-T)].
\label{B2}
\end{eqnarray}
Here the overdot stands for the differentiation with respect to $\tau$.

To construct the solution between $-T$ and $T$ it is convenient to
interpret the modulus and the phase of $\varphi$ as polar coordinates of a particle
on the plane,
\[
\varphi= r e^{i \theta},
\]
and the coordinate $\tau$ as time.
The newtonian particle moves in the radial-symmetric mexican hat-shaped potential $U(r)=\frac12 (r^4-r^2)$. Hence the angular momentum
\be
\ell = {\dot \theta} r^2
\label{B5}
\ee
and the energy
\be
\mathcal E= \frac{{\dot r}^2}{2} + U_{\rm eff} (r), \quad U_{\rm eff}= \frac{\ell^2}{2 r^2} + \frac{r^4-r^2}{2}
\label{B1}
\ee
are conserved. The effective potential for the radial motion  is shown in Fig.\ref{Fig_Ueff}(a).


\begin{figure}[t]
\center
\includegraphics[width=8cm]{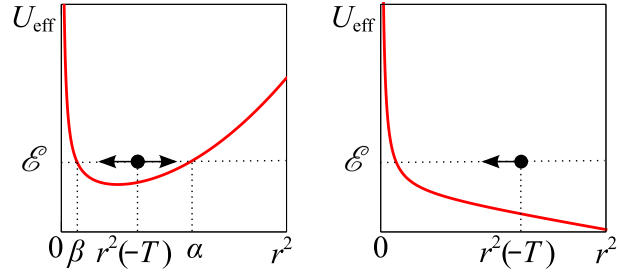}
%
%
\caption{The left panel shows $U_{\rm eff}(r)$, the effective potential of radial motion
defined in (\ref{B1}).
Two arrows indicate two possible directions of motion of the fictitious particle.
Whether the  particle starts with a positive or negative radial velocity,
it will run into a turning point.
Dropping the quartic term from (\ref{B1}) gives the  effective potential for the
linear equation  (right panel).  This time the turning point will only be run into if ${\dot r}(-T)<0$.  
}
\label{Fig_Ueff}       
\end{figure}

The particle starts its motion at time $\tau=-T$ and ends at $\tau=T$.
The boundary conditions follow from
(\ref{B3})-(\ref{B2}):
\begin{eqnarray}
r(\pm T)= \sech(\mu-T),   \label{B11} \\
{\dot r}(-T)= -{\dot r}(T)=  - (\lambda+\xi)  \sech(\mu-T),    \label{B12} \\
\theta(\pm T)= \pm \chi,      \label{B120}   \\
{\dot \theta}(\pm T)= \eta,
\label{B4}
\end{eqnarray}
where
\[
\xi= \tanh (\mu-T).
\]
The parameter $\xi$ satisfying $\lambda+\xi>0$ requires a negative initial velocity,
 ${\dot r}(-T)<0$, and   $\lambda+\xi<0$  corresponds to   an outward initial motion:
  ${\dot r}(-T)>0$.

Equations (\ref{B5}) and (\ref{B4}) imply that the conserved angular momentum has
a positive value:
\be
\ell = \eta \, \sech^2 (\mu-T)
\label{A9}
\ee
and so $\theta$ grows as $\tau$ varies from $-T$ to $T$. (Hence  $\chi>0$.)
The energy of the particle is found by substituting (\ref{B11})-(\ref{B12}) in (\ref{B1}):
\be
\mathcal E= \frac12   (1-\xi^2)              \left[  (\lambda  + \xi)^2 -\xi^2   +\eta^2\right].
\label{A33}
\ee

Since the energy \eqref{B1} includes the square of ${\dot r}$ but not ${\dot r}$ itself,
the information about the initial direction of motion becomes lost in the expression \eqref{A33}.
In fact, by using the value of energy instead of the boundary condition \eqref{B12} 
we are acquiring spurious solutions.  These solutions have the wrong sign of ${\dot r}(\tau)$ as $\tau \to -T+0$
and do not satisfy \eqref{B12}. 
Fortunately  we remember that the sign of $\left. {\dot r} \right|_{\tau \to -T+0}$ should be  opposite to that of $\lambda+\xi$.
This simple rule will be used to filter out  the spurious roots  in section \ref{Transcendental}.

The radial trajectory $r(\tau)$ for  a
\emph{PT}-symmetric solution satisfying \eqref{PT} 
 should be given by an even function and
the trajectory should
have a turning point at $\tau=0$: ${\dot r}(0)=0$.
The separable equation (\ref{B1}) has two
even solutions,
\be
r_A^2(\tau) = (\alpha-\beta) \mathrm{cn}^2  \left( \sqrt{2\alpha+ \beta - 1}\tau, k \right)+ \beta
\label{A15}
\ee
and
\be
r_{B}^2(\tau) =  (\alpha-\beta) \mathrm{cn}^2  \left( K-\sqrt{2 \alpha+\beta-1 }\tau, k \right)+ \beta.
\label{A16}
\ee
The Jacobi-function solutions (\ref{A15}) and (\ref{A16})  are parametrised by two parameters, $\alpha$ and $\beta$,
where $\alpha \geq \beta \geq 0$ and $\alpha+\beta >1$.
These are related to $\ell$ and $\mathcal E$ via
\begin{eqnarray}
\ell^2= \alpha \beta (\alpha+\beta-1),  \label{A31} \\
2   \mathcal E= (\alpha+\beta)(\alpha+\beta-1) -\alpha \beta.  \label{A32}
\end{eqnarray}
The
elliptic modulus $k$ is given by
\[
k^2= \frac{\alpha-\beta}{2 \alpha +\beta-1},
\]
and $K(k)$ in (\ref{A16}) is the complete elliptic integral of the first kind.
Eliminating $\ell$ and $\mathcal E$ between (\ref{A9}) and (\ref{A31}),
(\ref{A33}) and (\ref{A32}) we get
\begin{eqnarray}
\eta^2(1-\xi^2)^2= \alpha \beta (\alpha+\beta-1),  \label{A34} \\
(1-\xi^2) (\lambda^2+\eta^2+ 2 \lambda \xi)= (\alpha+\beta)(\alpha+\beta-1) -\alpha \beta.   \label{A35}
\end{eqnarray}

%

\begin{figure}[t]
\center
\includegraphics[width=8cm]{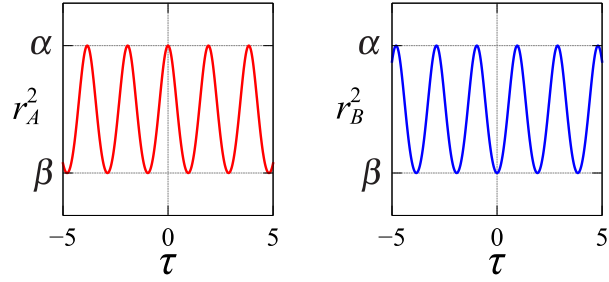}
%
%
\caption{Two even solutions of equation \eqref{B1}. (a): the maximum-centred solution, Eq.~(\ref{A15}).
(b): the minimum-centred solution,  Eq.~(\ref{A16}). In both panels $\alpha=2$ and $\beta=0.5$.
}
\label{rAB}     
\end{figure}

The solution $r_A$ is maximum-centred and $r_B$ is minimum-centred (see Fig.~\ref{rAB}).

For the purposes of our study, it is convenient to have a relation between $\alpha$, $\beta$, and $\xi$
not involving the gain-loss parameter.
Eliminating $\eta^2$ between (\ref{A34}) and (\ref{A35})
 we obtain
\be
(\lambda+\xi)^2=S^2,
\label{D200}
\ee
 where
\be
S^2=
 \frac{  (\alpha+\beta) (\alpha +\beta-1)-
 \alpha \beta }{1-\xi^2}
  - \frac{\alpha \beta (\alpha+\beta   -1)}{(1-\xi^2)^2} +\xi^2.
 \label{S}
\ee
The structural relation \eqref{D200}-\eqref{S} will prove useful in what follows.

\section{Boundary conditions and normalisation constraint}

The boundary conditions (\ref{B11}) give a transcendental equation
\begin{subequations}
\be
\beta+ (\alpha-\beta) \mathrm{cn}^2\left(y ,k \right)= 1-\xi^2,
 \label{A37}
\ee
for the $r_A$ and
\be
\beta+ (\alpha-\beta) \mathrm{cn}^2\left( K-y,k \right)= 1-\xi^2
 \label{A36}
\ee
\end{subequations}
for the $r_B$ solution. Here
\[
y= \sqrt{2 \alpha+\beta-1} T.
\]
Note that there is a simple correspondence between
equations \eqref{A37} and \eqref{A36}. Namely, if we assume that $\alpha, \beta$ and $\xi$
in \eqref{A37} and \eqref{A36} are given,
and denote $\widetilde{y}$ the value of $y$ satisfying \eqref{A36}, then
 $y=K-\widetilde{y}$ will satisfy \eqref{A37}.

The linear ($g=0$) Schr\"odinger equation \eqref{A1}, \eqref{C90}  corresponds to
the newtonian particle moving in the effective potential without the quartic barrier at
large $r$.
In this case the particle  can only run into a
turning point  if ${\dot r}(-T)<0$.
(See Fig.\ref{Fig_Ueff}(b)). On the other hand,
when  the quartic barrier is present,
the  particle will stop no matter whether ${\dot r}(-T)$ is negative or positive
(Fig.\ref{Fig_Ueff}(a)).

Consider, first, the solution $r_A(\tau)$ and
assume that  ${\dot r}(-T)>0$. 
The simplest trajectory satisfying the boundary conditions \eqref{B11}-\eqref{B12}
 describes
the particle starting with a positive radial velocity at $\tau=-T$,
reaching the maximum $r^2=\alpha$ at $\tau=0$ and returning to the starting point at $\tau=T$.
We use $T_0$ to denote the corresponding return time, $T$.
Coexisting with this solution are  longer  trajectories that reach the maximum $r^2=\alpha$ not once but $2n+1$ times
($n \geq 1$), namely,
at $\tau= 2 m \Theta$, where
\be
\Theta =  \frac{K(k)}{\sqrt{2 \alpha+\beta-1}}
\label{hp}
\ee
is the half-period of the  function $\mathrm{cn}^2(\sqrt{2 \alpha+\beta-1}\tau,k)$ and $m=0, \pm 1, \pm 2, ..., \pm n$.
These trajectories have the same values of $\alpha$ and $\beta$
(the same apogee and perigee)  but different
return times, $T= T_0+ 2n \Theta$.
Since the trajectory reaches its apogee $2n+1$ times and pays $2n$ visits to its minimum value of
$r^2=\beta$, we are referring to these solutions  as $(2n+1)$-hump, $2n$-dip nonlinear modes.

In contrast to these, the $r_A$ solution with ${\dot r}(-T)<0$
will be visiting its minimum $2n$ times ($n \geq 1$),  at $\tau= (1-2n)\Theta, ... (2n-1) \Theta$,
but will only pay $2n-1$ visits to its maximum. These trajectories will be
classified as $2n$-dip,
 $(2n-1)$-hump  modes.  The corresponding return times are $T=   2n \Theta- T_0$.

Turning to the minimum-centred solutions, we consider trajectories with
${\dot r}(-T)<0$ first.  
The simplest  $r_B$ solution describes  the particle starting with a negative velocity at $\tau=-T$,
reaching its perigee $r^2=\beta$ at $\tau=0$ and returning to the starting point at $\tau=T$, where $T=\Theta-T_0$
and $T_0$ was introduced above.
The $r_B$ solutions with more bounces visit the minimum $r$ not once but $2n+1$ times
($n \geq 1$),
at $\tau=2m \Theta$, $m=0, \pm 1, ..., \pm n$. The corresponding
return time is $T=  (2n+1) \Theta-T_0$. With their $2n+1$ local minima and $2n$ maxima,
these solutions are referred to as the $(2n+1)$-dip, $2n$-hump nonlinear modes.

Finally,
the $r_B$ solution with ${\dot r}(-T)>0$ reaches its apogee
$2n$ times
($n \geq 1$), that is, at $\tau= (2m+1) \Theta$, with $m=-n-1, ..., n$.
The return time is $T=T_0+(2n-1) \Theta$.
The trajectory pays $2n-1$ visits to its minimum value of $r^2=\beta$;
hence we classify this solution as
the $2n$ hump, $(2n-1)$-dip modes.


For the fixed $\alpha$ and $\beta$, the return time  $T_0$ is given by  the smallest positive root of \eqref{A37}. (Note that $T_0< \Theta$.)  Other roots
of this equation
 are
$T_0+2\Theta$, $T_0+4\Theta$, ...,
and $2\Theta-T_0$, $4\Theta-T_0$, ... .
On the other hand, the smallest positive root of \eqref{A36} is $\Theta-T_0$,
with other roots being $3 \Theta-T_0$, $5 \Theta-T_0$, ..., and $T_0+ \Theta$, $T_0+ 3 \Theta$, ... .

The normalised return time
\be
\frac{T}{\Theta} = \frac{y}{K(k)}
\label{RT}
\ee
 provides a simple tool for the identification of the nonlinear mode.
Indeed,   an $r_A$ solution  with $T/\Theta$
between  $2n$ and $2n+1$  has $2n+1$ humps, $2n$ dips and ${\dot r}(-T)>0$.
On the other hand, an $r_A$ solution  with $T/\Theta$
between $2n-1$ and $2n$  has  $2n$ dips, $2n-1$ humps  and ${\dot r}(-T)<0$.
Similarly, an $r_B$ solution will have $2n$ humps, $2n-1$ dips, and ${\dot r}(-T)>0$ --- if
 $T/\Theta$ lies between $2n-1$ and $2n$,
or  $2n+1$ dips, $2n$ humps, and ${\dot r}(-T)<0$ --- if $T/\Theta$ is
 between  $2n$  and  $2n+1$.

Evaluating the number of particles 
\[
N= \int_{-\infty}^{-T} \sech^2(\tau+\mu) d \tau+ \int_{-T}^T r_A^2 d  \tau + \int_T^\infty \sech^2(\tau-\mu) d \tau
\]
 and substituting in the normalisation constraint (\ref{norm2}), the constraint is transformed into
 \be
  \zeta_A(\alpha, \beta, y, \lambda)=\xi,
 \label{xia}
 \ee
  where the function $\zeta_A$ is defined by
\be
\zeta_A=  \frac{\alpha+\beta-1}{\sqrt{2\alpha+\beta-1}} y-1     +\frac{g}{4} \lambda -
\sqrt{2 \alpha+\beta-1} E\left[ \mathrm{am} \left( y \right) \right].
\label{D3}
\ee
Here $E[\mathrm{am} (y)]=E[\mathrm{am}(y),k]$ is the incomplete elliptic integral of the second kind,
\[
E[\mathrm{am}(y),k]= \int_0^{\mathrm{am} (y)} \sqrt{1-k^2 \sin^2 \theta} d \theta =
\int_0^y \mathrm{dn}^2 (w,k) dw,
\]
and $\mathrm{am} (y)$  is the elliptic amplitude.
(To simplify the notation, we omit the dependence on the elliptic modulus $k$ in \eqref{D3}.)

A similar procedure involving the solution $r_B$  yields
\be
\zeta_B(\alpha, \beta, y, \lambda)= \xi,
\label{xib}
\ee
 where we have introduced
\be
\zeta_B= \frac{\alpha+\beta-1}{\sqrt{2 \alpha+\beta-1}} y -1  +\frac{g}{4} \lambda -
\sqrt{2 \alpha+\beta-1}
\left\{ E \left( \frac{\pi}{2} \right) -
 E \left[ \mathrm{am} \left( K- y \right)  \right] \right\}.
 \label{D4}
\ee
Here $E(\pi/2)=E(\pi/2,k)$ is the complete elliptic integral of the second kind.

Note that unlike the pair of equations \eqref{A37} and \eqref{A36},
the normalisation constraints \eqref{xia} and \eqref{xib} are not related by the
transformation $y=K-\widetilde{y}$.
Therefore, the solution of the system \eqref{A36}+\eqref{xib} cannot be reduced to
solving \eqref{A37}+\eqref{xia}. The ``$r_A$" and ``$r_B$"  systems have to be
considered independently.

\section{Reduction to the linear Schr\"odinger equation}
\label{linear_reduction}

Before proceeding to the analysis of the systems  \eqref{A37}+\eqref{xia} and
\eqref{A36}+\eqref{xib},
it is instructive to verify that the transcendental equation  \eqref{C6} for the linear
Gross-Pitaevskii equation
is recovered as the $g \to 0$ limit of (\ref{A36}).

Assume that $\kappa$ is fixed and $g$ is varied in the stationary Gross-Pitaevskii equation \eqref{A1}.
When $g$ is small,
the solution of \eqref{A1} bifurcating from the solution of the corresponding {\it linear\/}  Schr\"odinger equation
 remains of order $1$.
The corresponding solution
 $\varphi$ of equation \eqref{A100}
  will  then have to be of order $g^{1/2}$.
  This means, in particular, that for all $|\tau | \geq T$,
  the ``outer" solution  (\ref{A6}) will have to approach zero as $g \to 0$.
In order to have
\[
\int_T^\infty
\sech^2 (\tau-\mu) d \tau  \to 0   \quad \text{as} \ g \to 0,
\]
 one has to   require that $\xi \to -1$ as $g \to 0$.
Defining $\sigma$ by
\be
\sech^2 (\mu-T) = \sigma^2 g, \quad \sigma=O(1),
\label{W5}
\ee
the quantity $\xi$ will have the following asymptotic behaviour:
\[
\xi = -1+ \frac{\sigma^2}{2} g +O(g^2).
\]
Letting
\[
\alpha= 1+ {\mathcal A}_1 g +   {\mathcal A}_2g^2 + ...., \quad  \beta= {\mathcal B}_1 g+ {\mathcal B}_2 g^2+ ...
\]
and substituting these expansions in  (\ref{A34}) and (\ref{A35}) gives
\[
{\mathcal A}_1= A \sigma^2,
\quad
{\mathcal B}_1 = B \sigma^2,
\]
where
\be
A = \lambda^2 - 2 \lambda+ \eta^2,
\quad
B = -\frac12 A + \frac12 \sqrt{A^2+ 4 \eta^2}.
\label{W1}
\ee

Turning to the transcendental equation (\ref{A36}), we note that the elliptic modulus of the Jacobi cosine tends to 1 as $g \to 0$:
\[
k^2= 1-({\mathcal A}_1+ 2 {\mathcal B}_1) g + ...
\]
In this limit, the elliptic function approaches a hyperbolic sine:
\[
 \mathrm{cn} (K-y,k)= k' \sinh y+ O(k'^3), \quad
 k'^2= 1-k^2.
 \]
In equation (\ref{A36}), $y=\sqrt{2 \alpha +\beta-1} T$.
With $\sqrt{2 \alpha+ \beta-1} = 1+O(g)$,
the transcendental equation
 reduces to
\be
B+ (A+ 2B) \sinh^2 T=1.
\label{W2}
\ee
Substituting for $A$ and $B$ from (\ref{W1}),  equation (\ref{W2}) gives
\be
e^{-4T}= \frac{(\lambda-2)^2+ \eta^2}{\lambda^2+\eta^2}.
\label{W3}
\ee
Transforming to $\gamma ={\eta}/{\lambda}$, $\kappa=1/{\lambda}$  and  $L=2 \lambda T$,
we recover the transcendental equation
 (\ref{C6}).

Finally, we consider
 the normalisation constraint (\ref{xib}).
 As $k \to 1$,
 the elliptic integral of the second kind has the following asymptotic behaviour:
 \[
E\left(\frac{\pi}{2}\right)-E  \left[ \mathrm{am} (K-y) \right] = k'^2 \int_0^y \frac{ du}{{\mathrm{dn}}^2 (u,k)}=
\frac{k'^2}{2} \left( y + \frac{\sinh 2y}{2} \right) +O \left( k'^4 \right).
\]
Making use of this expansion, we reduce Eq.\eqref{xib}
 to
\be
2-2AT+  \sqrt{A^2+4 \eta^2}  \sinh(2T)=\frac{\lambda}{\sigma^2}.
\label{W6}
\ee
Given $\lambda$, $\eta$ and $T$,
  equation (\ref{W6})
furnishes the coefficient $\sigma$
in the relation \eqref{W5}.
The
 relation (\ref{W5}), in turn, determines the amplitude of the
solution $\varphi$ corresponding to the nonlinearity parameter $g$.

\section{Transcendental equations}
\label{Transcendental}

In this section we assume that  $L$, the distance between the potential wells in
the original Gross-Pitaevskii equation  \eqref{A1}, and $g$,
the coefficient of the nonlinearity, are fixed.
On the other hand,  the gain-loss coefficient $\gamma$
and the ``nonlinear eigenvalue"  $\kappa$
(and hence the scaling factor $\lambda=\kappa^{-1}$, the scaled gain-loss $\eta=\gamma \kappa^{-1}$, and
the dimensionless well-separation distance $2T=\kappa L$) are allowed to vary.
The parameter $\xi$ --- the parameter defining the amplitude of the nonlinear mode ---  has not been fixed either.

Substituting $\xi$ from the normalisation constraint (\ref{xia}) to the boundary condition (\ref{A37})
we obtain a transcendental equation
\begin{subequations}
\be
\zeta_A^2+ \beta +
(\alpha-\beta) \mathrm{cn}^2  y -1=0
\label{D8}
\ee
for the parameters of the $r_A$ solution.
In a similar way,
substituting from (\ref{xib}) to (\ref{A36}) we obtain an equation
\be
\zeta_B^2 +  \beta +(\alpha-\beta) \mathrm{cn}^2   \left( K-  y \right) -1=0
  \label{D9}
\ee
\end{subequations}
  for the  solution  $r_B$.
Note that
 the functions $\zeta_A$ and $\zeta_B$,
 defined in \eqref{D3} and \eqref{D4},  depend on $\lambda$ as a parameter.

Substituting $\xi$  from the constraint (\ref{xia})
  to the structural relation (\ref{D200}) gives another transcendental equation for the maximum-centred nonlinear mode:
\begin{subequations}
   \be
 (\lambda +\zeta_A)^2- S_A^2=0.
\label{GA1}
\ee
 Here we are using a new notation
$S_A$  for the combination that was previously denoted
$S$ and given by \eqref{S}.
   In a similar way,
    substituting $\xi$ from (\ref{xib}) to (\ref{D200}) gives an equation for the minimum-centred (the $r_B$) solution:
    \be
(\lambda+\zeta_B)^2 -S_B^2=0,
 \label{D6}
 \ee
 \end{subequations}
 where the same combination  $S$  (defined in  \eqref{S}) has been renamed $S_B$.
 (We are using two different notations for the same combination in order to
 be able to set the variable $S$ to two different values later.)
 Like equations \eqref{D8} and \eqref{D9} before,  equations \eqref{GA1} and \eqref{D6} include
 $\lambda$ as a parameter.

  Eliminating $1-\xi^2$ from the expression \eqref{S}  by means
   of the boundary condition (\ref{A37}),  we specify
   $S_A$:
      \begin{subequations}
   \be
S_A^2=
 \frac{  (\alpha+\beta) (\alpha +\beta-1)-
 \alpha \beta }{      \beta+ (\alpha-\beta) \mathrm{cn}^2 y  }
  - \frac{\alpha \beta (\alpha+\beta    -1)}{\left[     \beta+ (\alpha-\beta) \mathrm{cn}^2 y    \right]^2}
    +1 -
    \beta-  (\alpha-\beta) \mathrm{cn}^2 y.
\label{GA2}
\ee
In order to specify $S_B$, we use  the boundary condition (\ref{A36}) instead:
  \begin{align} S_B^2=
 \frac{  (\alpha+\beta) (\alpha +\beta-1)-
 \alpha \beta }{  \beta+ (\alpha-\beta) \mathrm{cn}^2\left( K-y \right)   }
  - \frac{\alpha \beta (\alpha+\beta  -1)}{\left[    \beta+ (\alpha-\beta) \mathrm{cn}^2\left( K-y  \right)           \right]^2}
  \nonumber    \\
    \phantom{\frac{\frac12}{\sqrt{A^2}}}
  +1
 -\beta-  (\alpha-\beta) \mathrm{cn}^2\left( K-y  \right).
 \label{D06}
 \end{align}
 \end{subequations}

 The system \eqref{D8}, \eqref{GA1}
 with $\zeta_A$ as in \eqref{D3} and
 $S_A$ as in \eqref{GA2},  is a system of two
 equations for two parameters of the solution $r_A$
 (the ``$A$-system"). 
 For the given $L$, $g$ and  $\lambda$, the $A$-system has one or several roots $(\alpha_n, \beta_n)$.
 
 Not all roots define the Gross-Pitaevskii solitons though;  some roots are spurious.
 To filter the spurious roots out, we use the simple rule formulated in section \ref{Mexican}. First,
 we calculate the normalised return time \eqref{RT} and establish whether 
 $2n< T/\Theta< 2n+1$ or $2n-1< T/\Theta< 2n$ for some natural $n$. The former situation corresponds to ${\dot r}(-T)>0$ and the latter to ${\dot r}(-T)<0$.
 Evaluating  the amplitude parameter $\xi$ by means of   \eqref{D3}, we then discard the roots with the sign of $\lambda+\xi$ 
 coincident with the sign of ${\dot r}(-T)$.
 
 Having thus validated
  the genuine roots
for a range of $\lambda$ values, we can use 
 equation (\ref{A37}) to express $1-\xi^2$ through $\alpha(\lambda)$ and  $\beta(\lambda)$, and 
 then employ (\ref{A34}) to obtain $\eta(\lambda)$. Transforming from $\lambda$ and $\eta$ to $\kappa=1/\lambda$ and
 $\gamma=\eta/\lambda$, we arrive at
 \begin{subequations}
\be
\gamma_A(\kappa)=
\frac
{\sqrt{\alpha \beta (\alpha+\beta-1)} \kappa}
{
\beta+ (\alpha-\beta) \mathrm{cn}^2\left( \sqrt{2 \alpha+\beta-1} \kappa L/2\right)
}.
\label{gAB}
\ee

The  transcendental  system
\eqref{D9}, \eqref{D6}, \eqref{D4}, \eqref{D06}
(the ``$B$-system")
is not equivalent to the $A$-system
 and has to be solved independently.
Having determined the roots $\alpha(\lambda), \beta(\lambda)$
and validated them in the same way as  we did with the $A$-roots before, 
we obtain an analogue of the formula  (\ref{gAB}):
\be
\gamma_B(\kappa)=
\frac
{\sqrt{\alpha \beta (\alpha+\beta-1)} \kappa}
{
\beta+ (\alpha-\beta) \mathrm{cn}^2\left( K-\sqrt{2 \alpha+\beta-1}\kappa L/2 \right)
}.
\label{gCD}
\ee
\end{subequations}

The curve $\gamma(\kappa)$ --- or, equivalently, $\kappa(\gamma)$ ---  will constitute the central result of our analysis. 
Each root $(\alpha_n, \beta_n)$ of the $A$- or $B$-system will contribute a branch to this curve.
Before presenting  the $\kappa(\gamma)$ relationships
 for various $L$ and $g$, we note a useful symmetry of the $A$- and $B$-systems.

\section{The dip- and hump-adding transformation}

Consider the $r_A$ solution
and assume ($\alpha$, $\beta$) is a root of the  system \eqref{D8}, \eqref{GA1}, \eqref{D3}, \eqref{GA2}
with parameters $g$, $T$ and $\lambda$. The $A$-system with shifted parameters
\begin{align}
{\tilde T}=T+ 2 \Theta, \quad {\tilde \lambda}=\lambda, \quad
{\tilde g}=g+\Delta g, \nonumber \\
\Delta g=
\frac{8}{\lambda}
\left[ \sqrt{2 \alpha+\beta-1} E- \frac{\alpha+\beta-1}{\sqrt{2 \alpha+ \beta-1}} K \right],
\label{map}
\end{align}
will have the same root  ($\alpha$, $\beta$).
Here $\Theta$ is given by \eqref{hp}, while $K=K(k)$ and $E=E(k)$ are
the complete elliptic integrals of the first and second kind, respectively.

The mapping \eqref{map} adds two units to the normalised return time $T/\Theta$: $T/\Theta \to T/\Theta+2$.
Therefore
\eqref{map} adds two humps and two dips to the $A$-mode with $2n$ dips and $2n \pm 1$ humps.
Note that the expression in the square brackets in \eqref{map} is equal to $\int_0^\Theta r_A^2 d \tau$;
hence $\Delta g>0$ for any $\alpha$ and $\beta$. Therefore the map generates an infinite sequence of
nonlinearity strengths $g, g+\Delta g, g+ 2 \Delta g, ...$ supporting  hump-centred nonlinear modes with an increasing number of lateral crests.

Turning to the $B$-solution and the system \eqref{D9}, \eqref{D6}, \eqref{D4}, \eqref{D06},
the same mapping \eqref{map} transforms this system
into itself. As a result of the application of the mapping \eqref{map},
the $B$-solution with $2n$ humps and $2 n \pm 1$ dips acquires two new humps and two new dips.
The expression in the square brackets in \eqref{map} equals $\int_0^\Theta r_B^2 d \tau$;
hence we have $\Delta g>0$ in the case of the $B$ solution as well.
As for the $A$-solution before, the map \eqref{map} generates an infinite sequence of multicrest dip-centred modes.

\section{\emph{PT}-symmetric localised nonlinear modes}

\begin{figure}[t]
	\center
	\includegraphics[width=8cm]{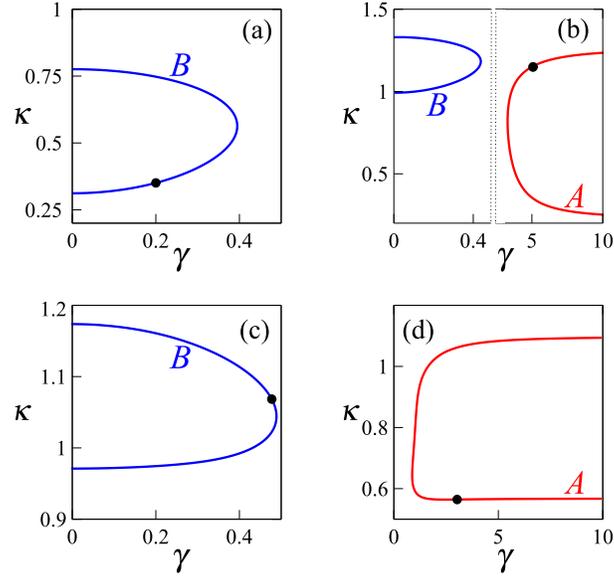}
	%
	%
	\caption{``Nonlinear eigenvalues" $\kappa$ vs the gain-loss coefficient $\gamma$
	for several sets of  $g$ and $L$. 
	The red curves correspond to the $A$- and the  blue ones to the 
	 $B$-modes.  Solutions marked by the 
	 black dots  are shown in Fig.~\ref{fig06}.
	  In these plots, $g=1$, $L=2.2$ (a);  $g=5$, $L=2$ (b);  $g \approx 12.12$, $L \approx 8.26$ (c);  $g \approx 12.38$, $L\approx 9.35$ (d). 
	  Note a break in the horizontal axis in (b). 
	}
	\label{fig05}       
\end{figure}

 The $A$- and  $B$-system 
 of two transcendental equations were solved numerically. 
 We employed a path-following algorithm with a newtonian iteration to obtain the root ($\alpha, \beta$) as $\kappa$ was varied
 with $g$ and $L$  fixed.
 The initial guess for the continuation  process was provided either by the  analysis of intersecting
 graphs of two simultaneous equations on the $(\alpha, \beta)$-plane, or by transplanting a known root 
 to a different set of $g$ and $L$ by means of the mapping \eqref{map}.

  Fig.~\ref{fig05}(a) traces a branch of the $B$-modes on the $(\gamma, \kappa)$-plane. 
  Here, the parameters ($L=2.2$ and $g=1$) correspond to 
   Fig.~1(a) in Ref.\cite{Cartarius1}. 
   These are nonlinear modes  with exactly one dip --- at $x=0$.
   The spatial structure of the mode is illustrated by  Fig. \ref{fig06} (a).

   As it was established numerically in \cite{Cartarius1}
   and corroborated analytically in section \ref{linear_reduction} above,
  the modes making up this branch are nonlinear deformations of the eigenfunctions of the linear Schr\"odinger
   equation (equation \eqref{A1} with $g=0$). 
   This kinship is clearly visible in 
    Fig.~\ref{fig06}(a) where the nonlinear ($g=1$) mode is plotted next to the 
     normalized linear ($g=0$)  eigenfunction with 
       the same value of $\gamma$.

\begin{figure}[t]
	\center
	\includegraphics[width=10cm]{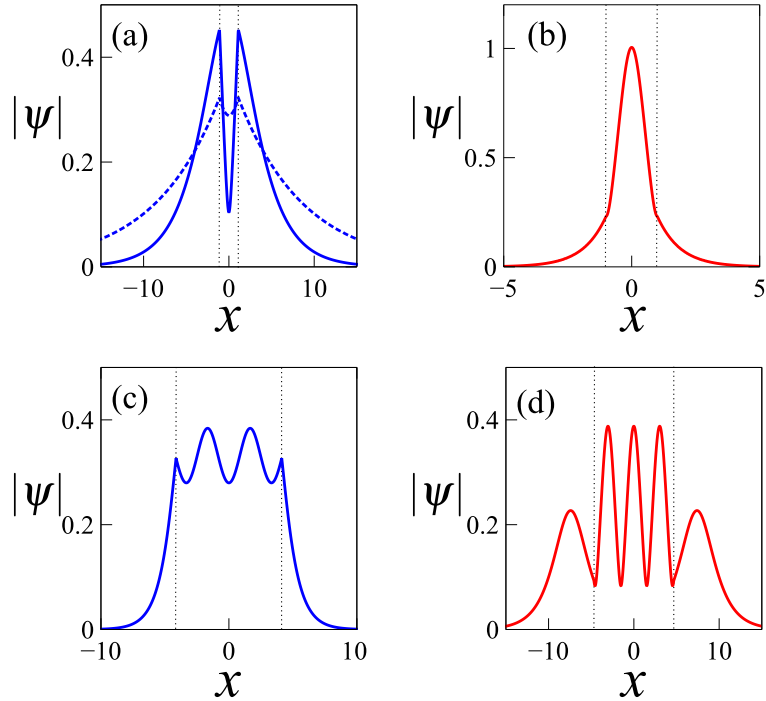}
	%
	%
	\caption{Solid curves depict nonlinear localised modes at representative points along the $\kappa(\gamma)$ curves.
	(These points are marked by black dots in the corresponding panels of Fig.~\ref{fig05}.)
	 The $B$-modes are shown in blue and the $A$-modes in red.   Vertical dotted lines indicate  the positions of the potential wells.
	 The dashed curve in (a) renders the eigenfunction of the equation \eqref{A1} with $g=0$ where $L$ and $\gamma$ are set
	 equal to the $L$ and $\gamma$  of the  nonlinear mode shown in the same panel.
	 Note that the three-hump mode  in (d) has four and not two local minima inside the $(-L,L)$ interval. 
	 The two
	 lateral dips are pressed close to the wells but are nevertheless discernible by zooming in.
	}
	\label{fig06}       
\end{figure}

In contrast to the above $B$ branch, the  $A$-modes exist only 
 when   $g$ exceeds a certain finite threshold; 
 these have no relation  to the $g=0$ eigenfunctions. 
 A single-humped $A$-mode is exemplified by Fig.~\ref{fig06}(b),
 with the corresponding 
 $\kappa(\gamma)$ branch appearing in Fig.~\ref{fig05}(b).

Finally, the bottom panels of Figs.~\ref{fig05}  and \ref{fig06} correspond to nonlinear modes with multiple humps
and dips.
The $\kappa(\gamma)$ curve in 
Fig.~\ref{fig05} (c) pertains to  a 
$B$-mode with three dips and
 two humps  between the potential wells. 
This branch results by the $\kappa$-continuation  from a root ($\alpha_0, \beta_0$) of the $B$-system 
with $\kappa$ equal to some $\kappa_0$ and $g$, $T$ 
  obtained by a once-off application of the map  \eqref{map}.
  A typical nonlinear mode arising along this branch is shown in Fig.~\ref{fig06} (c).

Figure~\ref{fig05}(d) traces a branch of the multi-hump  $A$-modes. 
Solutions on this branch have  three humps  and four dips situated between the wells;
an example is  in Fig.~\ref{fig06}(d).
The starting point for the branch was suggested by the graphical
analysis of the equations making up the $A$-system.

\section{Summary and conclusions}

The  double-$\delta$ well potential, where one well gains and the other one loses particles,
furnishes one of the simplest  
Gross-Pitaevskii 
models employed in the studies of boson condensates.
However the information on its nonlinear modes is scarce and based entirely on numerical solutions.
The purpose of this contribution was to formulate an {\it analytical\/} procedure for the construction
of localised nonlinear modes.

We started with the linear Schr\"odinger equation with the \emph{PT}-symmetric
double-delta well potential
and provided a simple analytical classification of  its bound states.

In the nonlinear situation, 
our procedure reduces the construction of localised modes to finding roots of a system of two 
simultaneous algebraic equations
involving elliptic integrals and Jacobi functions. 
We have classified the nonlinear modes under two broad classes: those with a maximum of $|\psi|$ at the centre
and those centred on a minimum of $|\psi|$.
Accordingly, there are transcendental systems of two types
(referred to as the $A$- and $B$-systems). 
Our construction procedure is supplemented with 
 an ``identification" algorithm allowing to 
relate the number of crests and troughs of the nonlinear mode  to the root of the transcendental system.
We have  established a correspondence
 between localised modes in systems with different 
distances between the wells and different nonlinearity strengths.

Our procedure has been illustrated by the construction of  branches of 
$A$- and $B$-modes for several values of $g$ and $L$.

\begin{acknowledgement}
This contribution is a spin-off from the project on the jamming anomaly in ${\it PT}$-symmetric systems \cite{jam};
we thank Vladimir Konotop for his collaboration on the main part of the project.
Nora Alexeeva's numerical assistance and 
Holger Cartarius' useful remarks are
 gratefully acknowledged. 
 This work was supported by the NRF of South Africa (grants UID 85751, 86991, and 87814) and the FCT (Portugal) through the grants UID/FIS/00618/2013 and PTDC/FIS-OPT/1918/2012.
\end{acknowledgement}
%

%
%
%

\end{document}